\soulregister{\em}{0}
\soulregister{\bf}{0}
\soulregister{\fig}{7}
\soulregister{\ie}{0}
\soulregister{\etc}{0}
\soulregister{\eg}{0}
\soulregister{\wrt}{0}
\soulregister{\resp}{0}
\soulregister{\cf}{0}
\def\cleardoublepage{\clearpage\if@twoside \ifodd\c@page\else
    \hbox{}
    \thispagestyle{empty}
    \newpage
    \if@twocolumn\hbox{}\newpage\fi\fi\fi}
\makeatother \clearpage{\pagestyle{plain}\cleardoublepage}
\newcommand*\chapterlabel{}
\titleformat{\chapter}[display]  
	{\normalfont\bfseries\Huge} 
	{\gdef\chapterlabel{\thechapter\ }}     
 	{0pt} 
 	  {\begin{tikzpicture}[remember picture,overlay]
    \node[yshift=-8cm] at (current page.north west)
      {\begin{tikzpicture}[remember picture, overlay]
        \draw[fill=black] (0,0) rectangle(35.5mm,15mm);
        \node[anchor=north east,yshift=-7.2cm,xshift=34mm,minimum height=30mm,inner sep=0mm] at (current page.north west)
        {\parbox[top][30mm][t]{15mm}{\raggedleft $\phantom{\textrm{l}}$\color{white}\chapterlabel}};  
        \node[anchor=north west,yshift=-7.2cm,xshift=37mm,text width=\textwidth,minimum height=30mm,inner sep=0mm] at (current page.north west)
              {\parbox[top][30mm][t]{\textwidth}{\color{black}#1}};
       \end{tikzpicture}
      };
   \end{tikzpicture}
   \gdef\chapterlabel{}
  } 
\titlespacing*{\chapter}{0pt}{50pt}{30pt}
\titlespacing*{\section}{0pt}{13.2pt}{*0}  
\titlespacing*{\subsection}{0pt}{13.2pt}{*0}
\titlespacing*{\subsubsection}{0pt}{13.2pt}{*0}
\newcounter{myparts}
\newcommand*\partlabel{}
\titleformat{\part}[display]  
	{\normalfont\bfseries\Huge} 
	{\gdef\partlabel{\thepart\ }}     
 	{0pt} 
 	  {\setlength{\unitlength}{20mm}
	  \addtocounter{myparts}{1}
	  \begin{tikzpicture}[remember picture,overlay]
    \node[anchor=north west,xshift=-65mm,yshift=-6.9cm-\value{myparts}*20mm] at (current page.north east) 
      {\begin{tikzpicture}[remember picture, overlay]
        \draw[fill=black] (0,0) rectangle(62mm,20mm);   
        \node[anchor=north west,yshift=-6.1cm-\value{myparts}*20mm,xshift=-60.5mm,minimum height=30mm,inner sep=0mm] at (current page.north east)
        {\parbox[top][30mm][t]{55mm}{\raggedright \color{white}Part \partlabel $\phantom{\textrm{l}}$}};  
        \node[anchor=north east,yshift=-6.1cm-\value{myparts}*20mm,xshift=-63.5mm,text width=\textwidth,minimum height=30mm,inner sep=0mm] at (current page.north east)
              {\parbox[top][30mm][t]{\textwidth}{\raggedleft \color{black}#1}};
       \end{tikzpicture}
      };
   \end{tikzpicture}
   \gdef\partlabel{}
  } 
\newcounter{tempctr}
\newcommand{\breakenumistart}{%
  \setcounter{tempctr}{\value{enumi}}%
  \end{enumerate}%
}
\newcommand{\breakenumiend}{%
  \begin{enumerate}%
  \setcounter{enumi}{\value{tempctr}}%
}
\newtheorem{theorem}{Theorem}[section]
\newtheorem{proposition}[theorem]{Proposition}
\newtheorem{lemma}[theorem]{Lemma}
\theoremstyle{definition}
\newtheorem{definition}[theorem]{Definition}
\newtheorem{example}[theorem]{Example}
\newtheorem{note}[theorem]{Note}
\newcommand{\fig}[2][]{Figure~\ref{fig:#2}\ensuremath{#1}}
\newcommand{\eq}[1]{(\ref{eq:#1})}
\newcommand{\ex}[1]{Example~\ref{ex:#1}}
\newcommand{\secn}[1]{Section~\ref{secn:#1}}
\newcommand{\lem}[1]{Lemma~\ref{lem:#1}}
\newcommand{\prop}[1]{Proposition~\ref{prop:#1}}
\newcommand{\mdash}{---}
\newcommand{\cf}[1][\ ]{cf.#1}
\newcommand{\ie}[1][\ ]{i.e.#1}
\newcommand{\etc}[1][\ ]{etc.#1}
\newcommand{\eg}[1][\ ]{e.g.#1}
\newcommand{\wrt}[1][\ ]{w.r.t.#1}
\newcommand{\bydef}[1]{\ensuremath{\stackrel{def}{#1}}}
\newcommand{\setdef}[2]{\ensuremath{\{{#1}\,|\,{#2}\}}}
\newcommand{\bsetdef}[2]{\ensuremath{\bigl\{\bigl.{#1}\,\bigr|\,{#2}\bigr\}}}
\newcommand{\Setdef}[2]{\ensuremath{\Big\{{#1}\,\Big|\,{#2}\Big\}}}
\newcommand{\sgoesto}[2][]{\ensuremath{\xrightarrow{#2}_{#1}}}
\newcommand{\goesto}[1][]{\sgoesto[]{#1}}
\newcommand{\ngoesto}[1][]{\not\sgoesto{\,\ #1}}
\newcommand{\glue}[1][]{\ensuremath{{\cal G}^{#1}}}
\newcommand{\less}{\prec}
\newcommand{\cA}{\ensuremath{\mathcal{A}}}
\newcommand{\cB}{\ensuremath{\mathcal{B}}}
\newcommand{\cO}{\ensuremath{\mathcal{O}}}
\newcommand{\cR}{\ensuremath{\mathcal{R}}}
\newcommand{\derrule}[3][1]{%
  \ensuremath{%
    \begin{array}{*{#1}{@{\hspace{2mm}}c@{\hspace{2mm}}}}
      #2\\
      \hline
      \multicolumn{#1}{c}{#3}
    \end{array}%
  }%
}
\newlength{\ruleht}
\title{A Note on the Expressiveness of BIP}
\author{Eduard Baranov and Simon Bliudze
\institute{\'Ecole polytechnique f\'ed\'erale de Lausanne, Station 14, 1015 Lausanne, Switzerland}
\email{firstname.lastname@epfl.ch}
}
\begin{document}
\maketitle

\begin{abstract}
  We extend our previous algebraic formalisation of the notion of
  component-based framework in order to formally define two
  forms\mdash strong and weak\mdash of the notion of full
  expressiveness.  Our earlier result shows that the BIP
  (Behaviour-Interaction-Priority) framework does not possess the
  strong full expressiveness.  In this paper, we show that BIP has the
  weak form of this notion and provide results detailing weak and strong full expressiveness for classical BIP and several modifications, obtained by relaxing the constraints imposed on priority models.  
\end{abstract}


\section{Introduction}
\label{secn:intro}

In our previous work \cite{BarBliu15-offer}, we have formalised some
of the properties that are desirable for component-based design
frameworks, namely: incrementality, flattening, compositionality and
modularity~\cite{gossler05composition,framework05}.  We have also
discussed the {\em full expressiveness} property, although without
providing a formal definition for it.  The formalisation is based on a
very simple, abstract algebraic definition of the notion of
component-based framework, which we extend below in order to also
provide such formal definition of full expressiveness.

Intuitively, \emph{flattening} requires that, for any component obtained by
hierarchically applying two composition operators to a finite set of
sub-components, there must exist an equivalent component obtained by
applying a single composition operator to the same sub-components.  {\em Full
  expressiveness
  \wrt a given set of operators}\mdash \eg those defined by a
particular Structural Operational Semantics (SOS) rule format\mdash
requires that all operators in that set be expressible as composition
operators in the component-based framework.

In \cite{BarBliu15-offer}, we have studied the satisfaction of the
above properties by BIP (Behaviour-Interaction-Priority), which is a
component-based framework for the design of correct-by-construction
concurrent software and systems based on the separation of concerns
between coordination and computation \cite{Basu:2011,bip06}.

BIP systems consist of components modelled as Labelled Transition
Systems (LTS).  Transitions are labelled by ports, which are used for
synchronisation with other components.  Composition operators defining
such synchronisations are obtained by combining  {\em interaction} and 
{\em priority models}.  Operational semantics of the BIP composition
operators is defined by SOS rules
in a format, which is a restriction of GSOS
\cite{bloom-phd}.
Below we refer to this format as {\em BIP-like SOS}.
We focus on the flattening and full expressiveness \wrt BIP-like SOS of BIP with the
classical semantics defined in \cite{BliSif07-acp-emsoft} and used in the language and code-generation tool-set developed by VERIMAG.\footnote{%
  \url{http://www-verimag.imag.fr/New-BIP-tools.html}
}

In \cite{BarBliu15-offer}, we have provided a counter-example showing
that the classical semantics of BIP does not possess flattening, which
implies that it does not possess full expressiveness \wrt BIP-like SOS either.  
This shows that the often encountered informal statement:
``BIP possesses the expressiveness of the universal glue'' (or its
equivalent in slightly different formulations) is based on an
erroneous proposition in previous work \cite[Proposition
  4]{BliSif08-express-concur}.  
The fundamental reasons for this absence of full
expressiveness lie in the definition of the priority models.
A priority model is a strict partial order on the underlying
interaction model (set of allowed interactions).  This
definition guarantees that applying a priority model does
not introduce deadlocks in the otherwise deadlock-free
system.  However, such deadlocks can be introduced by
certain operators respecting BIP-like SOS.

In \cite{BarBliu15-offer}, we have shown that relaxing the
restrictions on priority models to allow arbitrary relations
on interactions\mdash rather than strict partial orders on
interactions in the interaction model\mdash provides full
expressiveness \wrt the full class of BIP-like SOS
operators.

In this paper, we refine this discussion as follows
\begin{itemize}
\item We formally define two notions\mdash weak and
  strong\mdash of full expressiveness.  Weak full
  expressiveness means that any operator that can be defined
  by a set of BIP-like SOS rules can be expressed as a {\em
    hierarchical composition} of BIP composition operators,
  as opposed to only one composition operator for strong
  full expressiveness.
\item We provide a syntactic characterisation of a subset of
  operators defined by BIP-like SOS rules.
\item We show that BIP has weak full expressiveness with
  respect to this subset of operators.
\item We show that relaxing the partial order restriction in
  the definition of priority models allows us to recover
  weak full expressiveness \wrt the full class of BIP-like
  SOS operators.
\end{itemize}

The rest of the paper is structured as follows: \secn{theory} presents
the algebraic formalisation of the notion ``component-based
framework'' and defines its basic properties, namely flattening,
strong and weak full expressiveness.  \secn{model} introduces the BIP
component-based framework and its formal semantics.  \secn{express}
presents the main results of the paper as stated above.
\secn{related} briefly discusses some related work.  Finally,
\secn{conclusion} concludes the paper.


\section{Algebraic formalisation of component-based frameworks}
\label{secn:theory}

Each component-based design framework can be viewed as
an algebra $\cA$ of components equipped with a semantic mapping
$\sigma$ and an equivalence relation $\simeq$, satisfying a set of
basic properties, which we list below.  More precisely, $\cA$ is an
algebraic structure generated by a {\em behaviour type} $\cB$ and a
set \glue{} of {\em composition (glue) operators}:  
\[
\cA ::= B \ |\ f(C_1,\dots,C_n)\,,
\]
with $B \in \cB$, $C_1, \dots, C_n \in \cA$ and $f \in \glue$.
We call the elements of $\cA$ {\em components} and the elements of
$\cB$ {\em behaviours}.  The algebraic structure $\cA$ represents the
set of all systems constructible within the framework.  Behaviour type
$\cB$ defines the {\em semantic nature} of the components manipulated
by the framework.

The {\em semantic mapping} $\sigma: \cA \rightarrow \cB$ assigns to
each component its meaning in terms of the behaviour type $\cB$.  The
semantic mapping must be consistent: for all $B \in \cB$, must hold
the equality $\sigma(B) = B$.  The semantic mapping is called {\em structural}, if it is
defined by associating to each $n$-ary glue operator $f: \cA^n
\rightarrow \cA$ a corresponding operator $\tilde{f} : \cB^n
\rightarrow \cB$ and putting
\begin{align*}
  &
  \sigma\bigl(f(C_1,\dots,C_n)\bigr) 
  =
  \tilde{f}\bigl(\sigma(C_1),\dots,\sigma(C_n)\bigr)
  \,,
  &
  \text{for all $C_1, \dots, C_n \in \cA$ and $f \in \glue$.}
\end{align*}

Finally, the equivalence relation $\simeq\ \subseteq \cA \times
\cA$\mdash that allows comparing components in terms, for example, of
their functionality, observable behaviour or capability of interaction
with the en\-vi\-ron\-ment\mdash must respect the semantics: for all
$C_1,C_2 \in \cA$, must hold the implication $\sigma(C_1) =
\sigma(C_2) \implies C_1 \simeq C_2$.

In this context, the flattening property mentioned in the introduction
is formalised by requiring that
\begin{multline*}
  \forall i,j \in [1,n]\, (i \leq j),\ \forall C_1, C_2, \dots,
    C_n \in \cA,\ \forall f, g \in \glue,\ \exists h \in \glue:
  \\
  f\bigl(C_1, \dots, C_{i-1}, g(C_i, \dots, C_j), C_{j+1}, \dots, C_n)\bigr)
  \simeq 
  h(C_1, \dots, C_n) 
  \,.
\end{multline*}
In other words, \glue{} must be closed under composition. 
Similarly, given a set of operators $\cO \subseteq
\bigcup_{n=0}^\infty(\cB^n \rightarrow \cB)$, we say that the
component-based framework $(\cA, \sigma, \simeq)$ has \emph{strong
  full expressiveness \wrt $\cO$} iff
\[
  \forall o \in \cO^n,\ \exists \tilde{o} \in \glue:\
  \forall B_1,\dots,B_n \in \cB,\ 
  \sigma(\tilde{o}(B_1,\dots,B_n)) = o(B_1,\dots,B_n)
  \,,
\]
where $\cO^n = \cO \cap (\cB^n \rightarrow \cB)$.  We say that $(\cA,
\sigma, \simeq)$ has \emph{(weak) full expressiveness \wrt $\cO$} iff,
\[
  \forall o \in \cO^n,\ \exists \tilde{o} \in \glue[][Z_1,\dots,Z_n]:\
  \forall B_1,\dots,B_n \in \cB,\ 
  \sigma(\tilde{o}[B_1/Z_1,\dots,B_n/Z_n]) = o(B_1,\dots,B_n)
  \,,
\]
where $\glue[][Z_1,\dots,Z_n]$ is the set of expressions on variables
$Z_1,\dots, Z_n$ obtained by hierarchically applying the glue
operators from $\glue$; whereas $\tilde{o}[B_1/Z_1,\dots,B_n/Z_n] \in
\cA$ is the component obtained by substituting in $\tilde{o}$
the variables $Z_i$ by components $B_i$, for all $i \in [1,n]$.


\section{The BIP component-based framework}
\label{secn:model}

In this section, we briefly recall BIP and its classical operational
semantics, as initially published in \cite{BliSif07-acp-emsoft}.  The
behaviour type in BIP is the set of \emph{Labelled Transition Systems}
(LTS).
  
\begin{definition}
  \label{defn:lts}
  A \emph{labelled transition system} (LTS) is a triple
  $(Q,P,\goesto)$, where $Q$ is a set of \emph{states}, $P$ is a set
  of \emph{ports}, and $\goesto\, \subseteq Q\times (2^P\setminus
  \{\emptyset\}) \times Q$ is a set of \emph{transitions} labelled by
  \emph{interactions}, \ie non-empty sets of ports.
  For $q,q' \in Q$ and $a \in 2^P$, we write $q \goesto[a] q'$ iff
  $(q,a,q') \in\, \goesto$.  A label $a \in 2^P$ is \emph{active} in a
  state $q \in Q$ (denoted $q \goesto[a]$), iff there exists $q' \in
  Q$ such that $q \goesto[a] q'$.  We abbreviate $q
  \ngoesto[a]\ \bydef{=} \lnot (q \goesto[a])$.
\end{definition}

\begin{note}
  \label{rem:indices}
  In the rest of the paper, whenever we speak of a set of LTS
  $B_i=(Q_i, P_i, \sgoesto[i]{})$, for $i\in [1,n]$, we assume that
  all $P_i$ are pairwise disjoint, \ie $i \neq j$ implies $P_i \cap
  P_j = \emptyset$.  We denote $P \bydef{=} \bigcup_{i=1}^n P_i$.
  When the indices are clear from the context, we drop them on
  transition relations and denote write $\goesto$.
\end{note}

Glue operators are separated in two
categories: {\em interaction models} define the sets of allowed {\em
  interactions}, that is synchronisations between the transitions of
their operand components; {\em priority models} define the
scheduling\mdash or conflict resolution\mdash policies, reducing
non-determinism when several synchronisations allowed by the
interaction model are enabled simultaneously.  

\paragraph{Interaction models}
An \emph{interaction model} is a set of interactions $\gamma \subseteq
2^P\setminus \{\emptyset\}$.  The semantics of the application of an
interaction model $\gamma$ is defined by putting
$\sigma(\gamma(B_1,\dots, B_n)) \bydef{=} (Q, P,
\sgoesto[\gamma]{})$, with $Q = \prod_{i=1}^n Q_i$ and the minimal
transition relation $\sgoesto[\gamma]{}$ satisfying the rule
\begin{equation}
  \label{eq:transsem}
  \derrule[3]{
    a \in \gamma
    &
    \Setdef{q_i \goesto[a \cap P_i] q'_i}{i \in I}
    &
    \Setdef{q_i = q'_i}{i \not\in I}
  }{
    q_1\dots q_n \sgoesto[\gamma]{a} q'_1\dots q'_n
  }\,,
\end{equation}
where $I = \setdef{i\in [1,n]}{a \cap P_i \neq \emptyset}$.
Intuitively, this means that an interaction $a$ allowed by
the interaction model $\gamma$ can be fired when all the components
involved in $a$ are ready to fire the corresponding transitions.  All
the components that are not involved in $a$ remain in their current
state.

\paragraph{Priority models}
For a behaviour $B = (Q,P,\goesto)$, a {\em priority model} is a
strict\footnote{%
  As opposed to a (non-strict) partial order, which is a
  reflexive, antisymmetric and transitive relation, a {\em
    strict} partial order is an irreflexive and transitive
  (hence also asymmetric) one.
} partial order $\pi \subseteq 2^P \times 2^P$ (we write $a \less
b$ as a shorthand for $(a,b) \in \pi$).  The semantics of the
application of a priority model $\pi$ is defined by putting
$\sigma(\pi(B)) \bydef{=} (Q, P, \sgoesto[\pi]{})$, with the minimal
transition relation $\sgoesto[\pi]{}$ satisfying the rule
\begin{equation}
  \label{eq:prisem}
  \derrule[2]{
    q \goesto[a] q' &
     \Setdef{q \ngoesto[b]}{a \less b}
  }{
    q \sgoesto[\pi]{a} q'
  }\,.
\end{equation}

Intuitively, this means that an interaction can be fired only if no
higher-priority interaction is enabled. 

Notice that the semantic mapping $\sigma$ defined by \eq{transsem} and
\eq{prisem} is structural, since it is defined by associating to both
interaction and priority models operators on behaviours.

\begin{note}
  \label{rem:interfaces}
  The rules \eq{transsem} and \eq{prisem} defining the semantics of
  BIP operators require that a partition $\bigcup_{i=1}^n P_i = P$ be
  defined, but they do not depend on the specific behaviours $B_1,
  \dots, B_n$.
\end{note}

\begin{definition}
  \label{defn:bip:glue}
  An $n$-ary {\em BIP glue operator} is a triple $\bigl((P_i)_{i=1}^n,
  \gamma, \pi\bigr)$, where $(P_i)_{i=1}^n$ are disjoint sets of ports
  and, denoting $P \bydef{=} \bigcup_{i=1}^n P_i$, the remaining two
  elements $\gamma \subseteq 2^P$ and $\pi \subseteq \gamma \times \gamma$
  are, respectively, interaction and priority models on $P$. 
  (In the remainder of the paper, we omit the sets of ports
  $(P_i)_{i=1}^n$ when they are clear from the context.)
\end{definition}

To simplify the notation, we denote the component obtained by applying
the glue operator $\bigl((P_i)_{i=1}^n,\allowbreak \gamma, \pi\bigr)$ to
sub-components $B_1,\dots,B_n$, by $\pi\gamma(B_1,\dots,B_n)$ instead
of $\bigl((P_i)_{i=1}^n, \gamma, \pi\bigr)(B_1,\dots,B_n)$.
Furthermore, when $\pi = \emptyset$, we write directly
$\gamma(B_1,\dots,B_n)$, omitting $\pi$.

\begin{definition}
  \label{defn:equivalence:classical}
  Two behaviours $B_i=(Q_i,P_i,\goesto)$, for $i=1,2$ are {\em
    equivalent} if $P_1 = P_2$, and the two LTS are bisimilar, \ie
  there exists a bisimulation \cite{park1981caa} relation $R \subseteq
  Q_1 \times Q_2$ total on both $Q_1$ and $Q_2$.
\end{definition}

\begin{figure}
  \centering
  \begin{tabular}[b]{c}
    \subfigure[]{
      \centering
      \scalebox{0.4}{\input{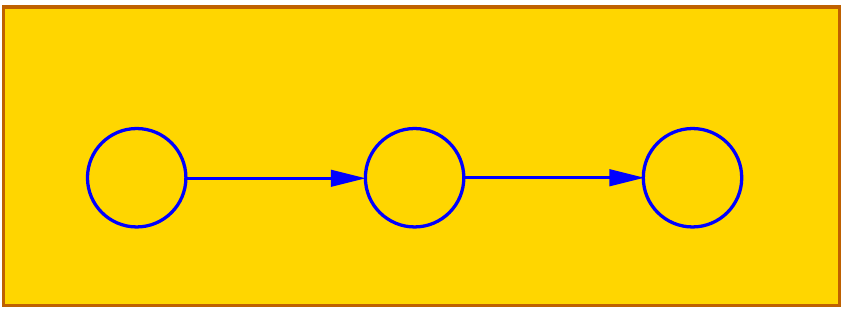_t}}
      \label{fig:priority1}
    }
    \\
    \subfigure[]{
      \centering
      \scalebox{0.4}{\input{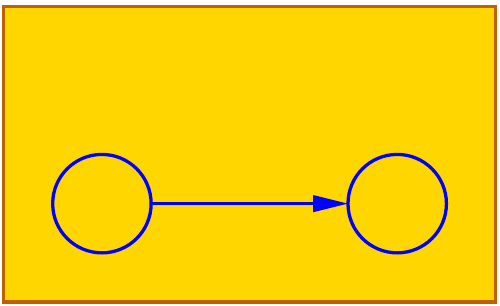_t}}
      \label{fig:priority2}
    }
  \end{tabular}
  \hspace{1cm}
  \subfigure[]{
    \centering
    \scalebox{0.4}{\input{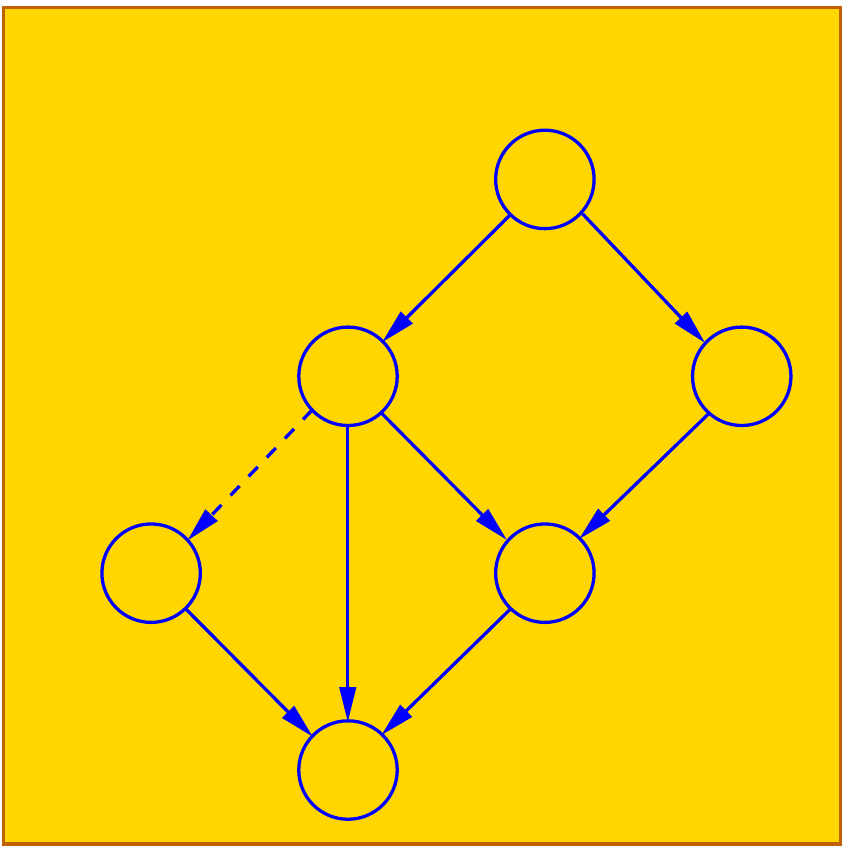_t}}
    \label{fig:priority3}
  }
  \caption{Component behaviours for \ex{priority}}
  \label{fig:priority}
\end{figure}

\begin{example}
  \label{ex:priority}
  Consider the two components $B_1$ and $B_2$ shown in
  Figures~\ref{fig:priority1} and \ref{fig:priority2}, with $P_1 =
  \{p,q\}$ and $P_2 = \{r\}$, and put $\gamma = \{p, q, r, qr\}$ and
  $\pi = \{q\less r\}$.\footnote{%
    To simplify the notation we use the juxtaposition $\gamma = \{p,
    q, r, qr\}$ instead of the set notation $\gamma = \bigl\{\{p\}, \{q\}, \{r\},
    \{q, r\} \bigr\}$ for interactions.  Similarly, we directly write
    $\pi = \{q\less r\}$ instead of $\pi = \{(q, r)\}$
  } The glue operator defined by the combination of the interaction model
  $\gamma$ and the priority model $\pi$ is given by the following four
  rules, obtained by composing
rules of forms \eq{transsem} and \eq{prisem}
and removing premises, whereof satisfaction does not depend on the
state of the operand components (\eg the premise $a \in \gamma$ is
satisfied in all states):
  \begin{equation}
    \label{eq:priority}
    \derrule{
      q_1 \goesto[p] q_1'
    }{
      q_1 q_2 \goesto[p] q_1' q_2
    }
    \,,
    \qquad
    \derrule{
      q_2 \goesto[r] q_2'
    }{
      q_1 q_2 \goesto[r] q_1 q_2'
    }
    \,,
    \qquad
    \derrule[2]{
      q_1 \goesto[q] q_1' &
      q_2 \goesto[r] q_2'
    }{
      q_1 q_2 \goesto[qr] q_1' q_2'
    }
    \,,
    \qquad
    \derrule[2]{
      q_1 \goesto[q] q_1' &
      q_2 \ngoesto[r] 
    }{
      q_1 q_2 \goesto[q] q_1' q_2
    }
    \,.
  \end{equation} 

  The composed component $\pi\gamma(B_1,B_2)$ is shown in
  \fig{priority3}.  The dashed arrow $21 \goesto[q] 31$ shows the
  transition present only in $\gamma(B_1,B_2)$, but not in
  $\pi\gamma(B_1,B_2)$.  Solid arrows show the transitions of
  $\pi\gamma(B_1,B_2)$.

  Among the transitions labelled by $q$, only the transition $22
  \goesto[q] 32$ is enabled and not $21 \goesto[q] 31$
  (\fig{priority3}).  Indeed, the negative premise in the fourth rule
  of \eq{priority}, generated by the priority $q \less r$, suppresses
  the interaction $q$ when a transition labelled $r$ is possible in the
  second component.
  \qed
\end{example}

After merging rules of forms \eq{transsem} and \eq{prisem} and the
simplification by removing the constant premises, all rules used to
define the semantics of BIP glue operators follow the format
\begin{equation}
  \label{eq:format:classical}
  \derrule[3]{
    \Setdef{q_i \goesto[a \cap P_i] q_i'}{i \in I}
    &
    \Setdef{q_i = q_i'}{i \not\in I}
    &
    \Setdef{q_j \ngoesto[b_j^k]}{j \in J, k \in K_j}
  }{
    q_1\dots q_n \goesto[a] q_1'\dots q_n'
  }
  \,,
\end{equation}
where $I = \setdef{i \in [1,n]}{a \cap P_i \neq \emptyset}$, whereas
$J,K_j \subseteq [1,n]$ and, for each $j \in J$ and $k \in K_j$, holds
$b_j^k \in 2^{P_j}$.

Let us now recall an important property of the BIP glue
operators with the above semantics, which was originally
shown in \cite{GosSif04-priority}: application of a priority
model does not introduce deadlocks.

\begin{definition}
  \label{defn:deadlock}
  Let $B = (Q, P, \goesto)$ be a behaviour.  A state $q \in
  Q$ is a {\em deadlock} iff holds $\forall a \subseteq P,\,
  q \ngoesto[a]$.
\end{definition}

\begin{lemma}[\cite{GosSif04-priority}]
  \label{lem:priority}
  Let $B_i = (Q_i, P_i, \goesto)$, for $i\in [1,n]$, be a
  set of behaviours, $\gamma$ and $\pi$ be respectively
  interaction and priority models on $P = \bigcup_{i=1}^n
  P_i$.  A state $q \in \prod_{i=1}^n Q_i$ is a deadlock in
  $\pi\gamma(B_1,\dots,B_n)$ if and only if it is a deadlock
  in $\gamma(B_1,\dots,B_n)$.
\end{lemma}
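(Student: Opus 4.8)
The plan is to prove the two implications separately, the forward one being routine and the converse carrying all the content. First I would compare the priority rule \eq{prisem} with the interaction rule \eq{transsem} and observe that every transition of $\pi\gamma(B_1,\dots,B_n)$ is already a transition of $\gamma(B_1,\dots,B_n)$: the premise $q \goesto[a] q'$ of \eq{prisem} is, in this composition, exactly a transition $q \sgoesto[\gamma]{a} q'$, and the priority model only grafts on the extra negative side-conditions $\setdef{q \nsgoesto[\gamma]{b}}{a \less b}$. Hence $\sgoesto[\pi]{}\, \subseteq\, \sgoesto[\gamma]{}$ as relations on $\prod_{i=1}^n Q_i$. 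Consequently, if $q$ is a deadlock in $\gamma(B_1,\dots,B_n)$, so that $q \nsgoesto[\gamma]{a}$ for every $a$, then a fortiori $q$ emits no $\pi$-transition, and $q$ is a deadlock in $\pi\gamma(B_1,\dots,B_n)$. This settles the ``if'' direction.

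For the converse I would argue by contraposition: assuming $q$ is not a deadlock in $\gamma(B_1,\dots,B_n)$, I exhibit one interaction that survives the priority filtering. Let $S = \setdef{a \in \gamma}{q \sgoesto[\gamma]{a}}$ be the set of interactions enabled at $q$ in the $\gamma$-semantics; by assumption $S \neq \emptyset$. Since $\pi \subseteq \gamma \times \gamma$ is a strict partial order, its restriction to $S$ is again a strict partial order on a non-empty set, and so admits a maximal element $a^\ast \in S$: there is no $b \in S$ with $a^\ast \less b$. I claim $a^\ast$ is enabled in $\pi\gamma(B_1,\dots,B_n)$. The premises of \eq{prisem} for $a^\ast$ are $q \sgoesto[\gamma]{a^\ast}$, which holds as $a^\ast \in S$, together with the negative conditions $q \nsgoesto[\gamma]{b}$ for every $b$ with $a^\ast \less b$. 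Any such $b$ lies in $\gamma$ because $\pi \subseteq \gamma \times \gamma$, so were $b$ enabled we would have $b \in S$ with $a^\ast \less b$, contradicting maximality; hence $q \nsgoesto[\gamma]{b}$ for all $b$ with $a^\ast \less b$, and \eq{prisem} yields $q \sgoesto[\pi]{a^\ast}$. Thus $q$ is not a deadlock in $\pi\gamma(B_1,\dots,B_n)$, as contraposition requires.

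The crux of the whole argument, and the single step I expect to be delicate, is the existence of the maximal element $a^\ast$. This is precisely where irreflexivity and transitivity of $\pi$, that is, its being a \emph{strict} partial order, are indispensable: for an arbitrary relation there need be no maximal enabled interaction, and a cyclic priority such as $a \less b \less a$ could suppress every enabled interaction and so manufacture a deadlock out of a live state. A maximal element is available here because the port set $P$ is finite, whence $\gamma \subseteq 2^P \setminus \{\emptyset\}$ and the enabled set $S$ are finite; I would state this finiteness assumption explicitly before invoking maximality. This dependence on the strict-partial-order structure is exactly the feature that the later sections disturb: relaxing $\pi$ to an arbitrary relation breaks \lem{priority}, which is why it also breaks the deadlock-freedom guarantee and underlies the expressiveness discussion.
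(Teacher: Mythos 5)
Your proof is correct and follows essentially the same route as the paper's: the ``if'' direction via the inclusion $\sgoesto[\pi]{}\,\subseteq\,\sgoesto[\gamma]{}$, and the ``only if'' direction by exhibiting a $\pi$-maximal enabled interaction that survives the priority filter (you take a maximal element of the whole enabled set, the paper takes one above a given enabled $a$, which is the same idea). Your explicit remark that finiteness of $P$ is what guarantees the maximal element is a reasonable refinement of the paper's terser argument.
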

\begin{proof}
  The ``if'' implication is trivial.  To prove the ``only
  if'' implication, assume that, for some $a \in \gamma$, we
  have $q \sgoesto[\gamma]{a}$.  Let $b \subseteq P$ be an
  interaction, maximal \wrt $\pi$, such that $b \in \gamma$,
  $a \less b$ and $q \sgoesto[\gamma]{b}$.  If such $b$
  exists, holds $q \sgoesto[\pi]{b}$.  Otherwise holds $q
  \sgoesto[\pi]{a}$.  In both cases, $q$ is not a deadlock
  in $\pi\gamma(B_1,\dots,B_n)$.
\end{proof}

Notice that this proof does not rely on $\pi$ being a strict
partial order.  The lemma can be generalised to any {\em
  acyclic} relation $\pi \subseteq \gamma \times \gamma$.


\section{Expressiveness}
\label{secn:express}

We now consider full expressiveness of BIP \wrt the set $\cO$ of
operators defined as pairs $\bigl((P_i)_{i=1}^n, \cR)$, where $n$ is
the arity of the operator, $(P_i)_{i=1}^n$ are pair-wise disjoint sets
of ports and $\cR$ is a set of SOS rules in the format
\eq{format:classical}.
In \cite{BarBliu15-offer}, we have shown that the operator defined by
the following four rules, which respect the format
\eq{format:classical}, cannot be expressed as a BIP glue operator in
the classical semantics:
\begin{equation}
  \label{eq:notbip}
  \derrule[2]{
    q_1 \goesto[p] q_1'
    &
    q_2 \ngoesto[r]
  }{
    q_1 q_2 q_3 \goesto[p] q_1' q_2 q_3
  }
  \,,
  \quad
  \derrule{
    q_1 \goesto[q] q_1'
  }{
    q_1 q_2 q_3 \goesto[q] q_1' q_2 q_3
  }
  \,,
  \quad
  \derrule{
    q_2 \goesto[s] q_2'
  }{
    q_1 q_2 q_3 \goesto[s] q_1 q_2' q_3
  }
  \,,
  \quad
  \derrule[2]{
    q_2 \goesto[r] q_2'
    &
    q_3 \goesto[t] q_3'
  }{
    q_1 q_2 q_3 \goesto[rt] q_1 q_2' q_3'
  }
  \,.
\end{equation}
We conclude that the classical semantics of BIP does not have
neither flattening, nor strong full expressiveness \wrt $\cO$.\footnote{%
  For the details of this example and the associated
  discussion, we refer the reader to \cite{BarBliu15-offer}.
}

Furthermore, the example below shows that the classical semantics of BIP does not have even weak full expressiveness.  

\begin{figure}
  \centering
  \input{graphics/nfebo.pspdftex}
  \caption{Component behaviour for \ex{notbip2}}  
    \label{fig:notbip2}
\end{figure}

\begin{example}
\label{ex:notbip2}
Consider a composition operator defined by the following two rules:
\begin{equation}
  \label{eq:notbip2}
  \derrule[2]{
    q_1 \goesto[p] q_1'
    &
    q_1 \ngoesto[r]
  }{
    q_1 \goesto[p] q_1'
  }
  \,,
  \qquad
  \derrule[2]{
    q_1 \goesto[r] q_1'
    &
    q_1 \ngoesto[p]
  }{
    q_1 \goesto[r] q_1'
  }
  \,,
\end{equation}
applied to the component in \fig{notbip2}. Assume that there exists a hierarchy of BIP glues, such that applying them to the component in \fig{notbip2} results in an equivalent composed component. States 1 and 2 of the composed component have outgoing transitions $p$ and $r$, respectively, thus all interaction models in the glues have to contain both interactions $p$ and $q$. State 3 of the composed component is a deadlock. Interaction models do not forbid any transition from this state and priority models cannot introduce deadlock by \lem{priority}. This contradicts the assumption and, consequently, the set of rules \eq{notbip2} is not expressible in BIP.  
\end{example}

The two fundamental reasons for the lack of
expressiveness are related to the definition of the priority
model:
\begin{itemize}
\item the information used by the priority model refers only
  to interactions authorised by the underlying interaction
  model\mdash all the information about transitions enabled
  in sub-components is lost \cite{BarBliu15-offer};
\item the priority model $\pi$ must be a strict partial
  order.
\end{itemize}

As we explain below, among these two reasons, the first one
is easily addressed to achieve weak, rather than strong,
full expressiveness, whereas the second one presents the
main difficulty.

\paragraph{What can be done without changing the BIP glue?}

Consider an $n$-ary operator $o: \mathit{LTS}^n \rightarrow
\mathit{LTS}$ defined by $(P_i)_{i=1}^n$ and the set of rules
\begin{equation}
 \label{eq:gsos}
 \derrule[3]{
   \Setdef{q_i \goesto[a^l \cap P_i] q_i'}{i \in I^l}
   &
   \Setdef{q_i = q_i'}{i \not\in I^l}
   &
   \Setdef{q_j \ngoesto[b_{j,k}^l]}{j \in J^l, k \in K_j^l}
 }{
   q_1\dots q_n \goesto[a^l] q_1'\dots q_n'
 }
 \,,
 \hspace{1cm}
 \text{for $l \in [1,m]$,}
\end{equation}
where, as above, $I^l = \bsetdef{i \in [1,n]}{a^l \cap P_i \neq
 \emptyset}$.  
For an interaction $a \in \setdef{a^l}{l \in [1,m]}$, denote $R_a
\bydef{=} \setdef{l \in [1,m]}{a = a^l}$ the set of rules with the
conclusion labelled by $a$.  Clearly, for the interaction $a$ to be
inhibited by the negative premises, one such premise must be involved
for each rule in $R_a$.  We denote by $j: R_a \leadsto J$ the {\em
 choice mappings} $j : R_a \rightarrow \bigcup_{l = 1}^m J^l$, such
that $j(l) \in J^l$, for all $l \in R_a$.\footnote{%
  The notion of choice mappings could also be defined as a co-product
  of mappings $\{l\} \rightarrow J^l$ from singleton subsets $\{l\}
  \subseteq R_a$.
}

We define the {\em inhibiting relation} $\pi \subseteq 2^P
\times 2^P$ (where $P = \bigcup_{i=1}^n P_i$) by putting
\begin{equation}
  \label{eq:inhibit}
  \pi\ =\ \bigcup_{l=1}^m\bsetdef{(a^l, b)}{b = \bigcup_{s \in R_{a^l}}b_{j(s),k(s)}^s,\text{ for some } j: R_{a^l} \leadsto J,\ k(s) \in K_{j(s)}^s}\,.
\end{equation}

\begin{figure}
  \centering
  \resizebox{25mm}{30mm}{\input{graphics/cyclebeh.pspdftex}}
  \caption{Component behaviour for \prop{notbipcycle}}  
    \label{fig:cyclebeh}
\end{figure}

\begin{proposition}
\label{prop:notbipcycle}
If $\pi$ has cycles, then the operator $o$ cannot be realised
  by any hierarchical composition of BIP glue operators.
\end{proposition}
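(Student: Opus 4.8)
The plan is to generalise the two-element obstruction $p \less r \less p$ of \ex{notbip2} to an arbitrary cycle of the inhibiting relation, and to make the argument robust against \emph{hierarchical} compositions by exploiting the crucial asymmetry between $o$ and BIP: the cyclic suppression encoded in $\pi$ cannot be reproduced by a finite hierarchy in which every single priority model is acyclic. First I would fix a minimal cycle $a_1 \less a_2 \less \dots \less a_k \less a_1$ (indices modulo $k$) in $\pi$. By \eq{inhibit}, each pair $a_t \less a_{t+1}$ witnesses that $a_{t+1}$ is a union $\bigcup_{s \in R_{a_t}} b_{j(s),k(s)}^s$ of negative-premise labels, one for every rule $s$ producing $a_t$; hence, whenever the sub-transitions constituting $a_{t+1}$ are all enabled in the operands, every rule deriving $a_t$ is blocked.

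Using this data I would construct the operand behaviours sketched in \fig{cyclebeh} so that $o$ applied to them has two features. First, a distinguished reachable state $q^*$ at which, simultaneously for all $t$, the operand sub-transitions forming each $a_t$ are enabled: there the cyclic inhibition blocks every $a_t$, and the gadget is arranged so that no further rule applies, making $q^*$ a \emph{deadlock} of $o$. Second, for each $t$ a separate state $q_t$ at which $a_t$ genuinely fires in $o$ — this is possible precisely because $a_t$ is a conclusion label, so some rule producing it can be realised with all its negative premises falsified. These two features play the roles that state~$3$ and states~$1,2$ played in \ex{notbip2}.

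Next I would assume, for contradiction, that some $\tilde{o} \in \glue[][Z_1,\dots,Z_n]$ realises $o$ on this gadget. Since $a_t$ fires from $q_t$, its derivation in $\sigma(\tilde{o})$ traverses the hierarchy, so at every node on that derivation the interaction model must admit the restriction of $a_t$ to the node's ports; as interaction models are plain, state-independent sets (\rem{interfaces}), the same restrictions are admitted at $q^*$. Consequently, at $q^*$ no interaction model filters any constituent of any $a_t$, and the \emph{only} way a constituent can be absent from the semantics of a sub-expression is suppression by some priority model. Now, were $q^*$ a deadlock of $\sigma(\tilde{o})$, each $a_t$ would be removed, and I would let $N_t$ be the lowest node at which a constituent of $a_t$ is suppressed by priority. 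At $N_t$ this suppression requires another interaction — necessarily one of the enabled cycle interactions $a_{s}$ — to be higher and still enabled at $N_t$, whence $a_s$ is itself suppressed only \emph{strictly above} $N_t$, i.e.\ $N_s \succ N_t$ on the (tree-shaped) hierarchy. Iterating yields a strictly ascending chain of nodes, impossible in a finite hierarchy; so some $a_t$ survives to the root and $q^*$ is not a deadlock of $\sigma(\tilde{o})$, contradicting the deadlock of $o$ at $q^*$.

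The main obstacle is exactly this last step: the \emph{hierarchical} strengthening of \lem{priority}. A single acyclic priority cannot introduce a deadlock, but one must rule out a \emph{distributed} cyclic suppression spread over several nodes. The key phenomenon to exploit is that suppression is consumptive — once an interaction is cut by a priority at a low node it can no longer serve as a suppressor higher up — which is precisely what turns a putative suppression cycle into the strictly ascending chain above. Making this rigorous requires careful bookkeeping of which restricted interactions reach which nodes at $q^*$, together with the guarantees from the witness states $q_t$ that interaction models never do the filtering; the gadget of \fig{cyclebeh} must be engineered so that all these constituents coexist at $q^*$ while each $a_t$ remains individually fireable elsewhere. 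Everything else — the cycle extraction and the deadlock computation for $o$ — is routine once the gadget is in place.
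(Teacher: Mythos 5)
Your gadget and the first two thirds of your argument coincide with the paper's: the same behaviours from \fig{cyclebeh}, the same witness states forcing every interaction model in the hierarchy to contain (the restrictions of) all the cycle interactions $a_i$, and the same observation that $o$ deadlocks at the all-loops state $q^*$ while each $a_t$ fires at its witness state. The divergence is in the closing step: the paper disposes of the deadlock by appealing to \lem{priority} (``priority models cannot introduce deadlock in this state''), whereas you attempt an explicit hierarchical argument via a lowest-suppression-node chain. That chain argument has a genuine gap at the assertion that the interaction suppressing $a_t$ at $N_t$ is ``necessarily one of the enabled cycle interactions $a_s$''. Nothing forces this. A node's interaction model may contain, besides the $a_i$, \emph{mixed} interactions assembled from constituents $c_i^j$ drawn from \emph{different} cycle elements; by construction of the gadget all constituents coexist at $q^*$, so such a mix is enabled there, yet it is enabled at no witness state (each witness state offers only the constituents of its own $a_t$), so the witness states place no constraint on it. A priority $a_t \less b$ with $b$ such a mix legitimately suppresses $a_t$ at $q^*$ without disturbing any witness transition, and $b$ can then be silently discarded by the interaction model of the parent node: interaction-model filtering is not ``consumptive'', so no ascending chain of suppressions arises. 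Concretely, for the cycle $pr \less qs \less pr$ with $P_1=\{p,q\}$, $P_2=\{r,s\}$, the two-layer hierarchy with inner glue $\bigl(\{pr,qs,ps,qr\},\ \{pr\less ps,\ qs\less qr\}\bigr)$ and outer glue $\bigl(\{pr,qs\},\emptyset\bigr)$ reproduces on the gadget both witness transitions and the deadlock at $q^*$, so your argument cannot extract a contradiction from the gadget as you describe it.

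A secondary imprecision: by rule \eq{prisem} a suppressor at $N_t$ only needs to be enabled in the $\gamma$-composition at $N_t$, not to survive $N_t$'s own priority model, so even when the suppressor is some $a_s$ you only obtain $N_s \succeq N_t$ rather than $N_s \succ N_t$; one must first pass to a $\pi$-maximal enabled interaction inside the node, as in the proof of \lem{priority}, before climbing the tree. Neither point is cosmetic. The mixed-suppressor phenomenon is exactly what makes the hierarchical case harder than \ex{notbip2}, where the port set is too small to admit mixes, and it is the place your proof (and, arguably, the paper's terse appeal to \lem{priority} across several layers) needs reinforcement, for instance by enriching the gadget or by arguing directly about which interactions can reach the root.
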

\begin{proof}
Consider a cycle in the inhibiting relation $\pi$ : $a_1 \prec a_2 \prec \dots \prec a_l \prec a_1$. 

Let $P = \bigcup_{j=1}^n P_j$, where $P_j = \{p_1^j, \dots, p_m^j\}$.
Let $c_i^j = a_i \cap P_j$ for $i \in [1,l], j \in [1,n]$ and $C_j = \bsetdef{c_i^j}{c_i^j \neq \emptyset}$.
 For each $j$ consider a behaviour as shown in \fig{cyclebeh}. There are no transitions from state $0$; from each state $i$ such that $c_i^j \neq \emptyset$, there is a single transition to state $m_j$ with labels $c_i^j \in C_j$, respectively, and loop transitions in state $m_j$ with labels $c_i^j \in C_j$. 

The composition of such behaviours with the operator $o$ allows a single transition $a_i$ from the state $q_1\dots q_n$, where $q_j = i$ if $c_i^j \neq \emptyset$ or $q_j = 0$ otherwise. In order to allow these transitions, an interaction model of a BIP glue must contain all $a_i$. In the state $q_1\dots q_n$, with $q_j = m_j$, all interactions $a_1, \dots, a_l$ are available. The operator $o$ forbids all of them from this state. Interaction models of BIP glues allow all these interactions and priority models cannot introduce deadlock in this state \lem{priority}. Thus, this system is not expressible in BIP.
\end{proof}

\begin{proposition}
  \label{prop:acyclic}
  If $\pi$ is acyclic, then the operator $o$ can be realised
  by a hierarchical composition of BIP glue operators.
\end{proposition}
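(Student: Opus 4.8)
The plan is to build, for the operator $o$ of \eq{gsos}, an explicit hierarchical BIP expression (depending only on the ports and the rule set, not on the operand behaviours, as required by \ref{rem:interfaces}) whose semantics equals $o(B_1,\dots,B_n)$. Acyclicity of the inhibiting relation $\pi$ from \eq{inhibit} will be used in two distinct places: to make each priority layer a genuine strict partial order, and to linearise the inhibitions so that no spurious, transitively induced suppression appears.

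First I would compose the operands with the interaction model $\gamma \bydef{=} \setdef{a^l}{l \in [1,m]} \cup \setdef{b}{(a,b) \in \pi}$, that is, the conclusion interactions of \eq{gsos} together with all combined inhibitors occurring in $\pi$. Writing $C_0 \bydef{=} \gamma(B_1,\dots,B_n)$, rule \eq{transsem} gives $q \goesto[a] q'$ in $C_0$ exactly when the positive premises of the rules with conclusion $a$ hold; the inhibitors $b$ are included only so that a later priority may refer to them, since a premise $q \ngoesto[b]$ is vacuous whenever $b \notin \gamma$. These auxiliary transitions are harmless because the final step of the construction applies the unary interaction model $\setdef{a^l}{l \in [1,m]}$ to the result: by \eq{transsem} with $n=1$ this acts as a restriction that keeps precisely the conclusion-labelled transitions and discards those labelled by inhibitors.

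The central step is to realise the suppression encoded by $\pi$ through a hierarchy of single-pair priorities rather than a single priority model. Since $\pi$ is acyclic I would fix a topological ranking of the interactions and apply the pairs $(a,b)\in\pi$ one layer at a time, in order of increasing rank of the source $a$; each layer $\{(a,b)\}$ is a single pair (with $a\neq b$ as $\pi$ is acyclic), hence an irreflexive and vacuously transitive relation, so it is a legitimate BIP priority model. The point of this ordering is that when the layer for $(a,b)$ is applied, no layer whose source is $b$ has yet been applied, so the activeness of $b$ still coincides with its activeness in $C_0$; thus $a$ is suppressed exactly when some inhibitor $b$ with $(a,b)\in\pi$ is active at the $\gamma$-level, and never merely because of a longer chain $a \less b \less c$. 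An induction over the layers then shows that a conclusion transition $q \goesto[a] q'$ survives iff it is present in $C_0$ and no $b$ with $(a,b)\in\pi$ is active at $q$ in $C_0$; unfolding \eq{inhibit}, this is precisely the condition that some rule of $R_a$ has all its negative premises satisfied. Together with the final restriction, this gives $\sigma$ of the expression equal to $o(B_1,\dots,B_n)$.

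I expect the main obstacle to be the correctness of this last equivalence, namely matching the activeness of a combined inhibitor $b = \bigcup_{s} b_{j(s),k(s)}^s$ as a single interaction in $C_0$ with the intended reading of the choice mapping, that each local premise $b_{j(s),k(s)}^s$ be separately violated. When the chosen premises lie in pairwise distinct operands these readings agree, since an interaction is active iff each of its per-operand slices is; the delicate case is when several premises of one choice mapping fall in the same operand $j$, for then activeness of the union $b \cap P_j$ is strictly stronger than separate activeness of its slices. Discharging this case — either by showing it cannot arise for the rule sets considered, or by refining $\gamma$ so the per-operand conjunction is faithfully detected — is where the real work lies; the layer induction and the behaviour of the final restriction are routine. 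A secondary check is that, as in the generalisation of \lem{priority} to acyclic relations noted just after that lemma, the acyclic family of layers introduces no deadlock absent from $C_0$, so the constructed behaviour and $o$ also agree on deadlock states.
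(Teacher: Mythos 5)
Your construction is essentially the paper's: the interaction model you call $\gamma$ is the paper's $\gamma_1$ (conclusion labels plus all combined inhibitors), your outer restriction to $\setdef{a^l}{l\in[1,m]}$ is the paper's outermost application of $(\gamma_2,\emptyset)$, and your topological layering of the pairs of $\pi$ is the paper's stratification $\pi_i \bydef{=} \setdef{(a,b)\in\pi}{d(a)=i-1}$ by the depth of the source in the directed acyclic graph. The only difference is granularity: you spend one layer per pair of $\pi$, whereas the paper groups all pairs with sources of equal depth into a single layer (each such $\pi_i$ is irreflexive and vacuously transitive for the same reason your singletons are), giving $d+1$ layers instead of $|\pi|+1$. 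Both orderings secure the same key invariant\mdash when the layer suppressing $a$ below $b$ runs, no layer with source $b$ has yet run, so the activeness of $b$ still coincides with its activeness in $\gamma_1(B_1,\dots,B_n)$\mdash so for weak full expressiveness the two are interchangeable.

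The obstacle you flag at the end is precisely the step that the paper's proof asserts without argument: in one direction it writes that since the premises of \eq{rule1-1} hold, ``interaction $b_{j(l),k(l)}^l$ is disabled, hence $b$ is also disabled'', and in the other direction it claims that the union $b=\bigcup_{s\in R_a}b_{j(s),k(s)}^s$ of the violated premise labels is enabled as a single interaction in $\gamma_1(B_1,\dots,B_n)$. Both claims are immediate only when the chosen premises of distinct rules in $R_a$ land in pairwise distinct operands, so that each $b\cap P_j$ equals a single premise label; if two rules contribute different premises inside the same $P_j$, then $q_j\goesto[b\cap P_j]$ neither implies nor is implied by the separate activeness of the two slices, exactly as you observe. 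So you have correctly isolated the one delicate point, but you leave it open\mdash and so does the published proof; closing it requires either restricting to rule sets in which each choice mapping selects at most one premise per operand, or enriching the set of inhibitors so that per-operand conjunctions of premises are represented faithfully. The concluding remark about deadlocks is redundant: once the transition relations are shown to coincide, agreement on deadlock states is automatic. Everything else in your proposal matches the paper's argument.
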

\begin{proof}
  Since $\pi$ is acyclic, we can associate a depth $d(a)$ to
  each interaction $a$ involved in $\pi$ as the length of
  the longest path leading to $a$ in the directed acyclic
  graph defined by $\pi$.  Denote $d \bydef{=} \max_a d(a)$.
  Furthermore, for $i \in [1,d]$, denote $\pi_i \bydef{=} \setdef{(a,b)\in
    \pi}{d(a) = i-1}$.

  Clearly all $\pi_i$ are strict partial orders.  
  Furthermore $\pi_i \subseteq \pi \subseteq \gamma_1 \times
  \gamma_1$, for all $i \in [1,d]$ and 
  \begin{align*}
    \gamma_1 &= \gamma_2 \cup 
    \bigcup_{l=1}^m\Setdef{\bigcup_{s \in R_{a^l}}b_{j(s),k(s)}^s}{j: R_{a^l} \leadsto J,\ k(s) \in K_{j(s)}^s}
    \,,
    \\
    \gamma_2 &= \bsetdef{a^l}{l \in [1,m]}
    \,.
  \end{align*}
  Hence, for all $i \in [1,d]$,
  $(\gamma_1, \pi_i)$ is a BIP glue operator.
  
  The operator $o$ is equivalent to the composition $(\gamma_2,\emptyset)\circ (\gamma_1, \pi_d)\circ \dots \circ (\gamma_1, \pi_1)$. We show that for any set of behaviours $B_i = (Q_i, P_i, \goesto{})$, with $i\in [1,n]$, holds $ \sigma\bigl(\gamma_2\bigl(\pi_d\gamma_1\bigl(\dots \pi_1\gamma_1(B_1,\dots, B_n)\bigr)\dots\bigr)\bigr) = o(B_1,\dots, B_n)$. We denote
\begin{align*}
  B_o &= o(B_1,\dots,B_n)\,, 
  &B_{\pi\gamma} &= \sigma\bigl(\gamma_2\bigl(\pi_d\gamma_1\bigl(\dots \pi_1\gamma_1(B_1,\dots, B_n)\bigr)\dots\bigr)\bigr)
  \,.
\end{align*}
The sets of states and ports of these behaviours are the same, thus we only need to check that their transitions coincide.

Let $q_1\dots q_n \goesto[a] q_1'\dots q_n'$ in $B_o$.  This means
that, among the rules defining $o$, \ie for some $l \in [1,m]$, there is a rule
\begin{equation}
  \label{eq:rule1-1}
  \derrule[3]{
    \Setdef{q_i \goesto[a \cap P_i] q_i'}{i \in I^l}
    &
    \Setdef{q_i = q_i'}{i \not\in I^l}
    &
    \Setdef{q_j \ngoesto[b_{j,k}^l]}{j \in J^l, k \in K_j^l}
  }{
    q_1\dots q_n \goesto[a] q_1'\dots q_n'
  }
  \,,
\end{equation}
such that $q_i \goesto[a \cap P_i]$, for all $i \in I$, and $q_j \ngoesto[b_{j,k}^l]$ for all $j \in J^l, k \in K_j^l$.
By construction both $\gamma_1$ and $\gamma_2$ contain $a$. Hence, $a$ is enabled in the state $q_1\dots q_n$ of $\gamma_1(B_1,\dots,B_n)$ and in the same state of $B_{\pi\gamma}$, provided that it is not disabled by any of priorities $\pi_1, \dots, \pi_d$.  Thus, we have to show that no interaction available from this state has higher priority.  By construction, priority rules that contain $a$ in the left-hand side can appear only in $\pi_{d(a)-1}$, thus other priority models cannot block $a$. Priority rules of the form $a \prec b$ have $b = \bigcup_{s \in R_{a}}b_{j(s),k(s)}^s$, for some $j: R_a \leadsto J$ and $k(s) \in K_{j(s)}^s$.  Since all the premises of \eq{rule1-1} are satisfied in $q_1\dots q_n$, interaction $b_{j(l),k(l)}^l$ is disabled.  Hence, $b$ is also disabled.  Thus $q_1\dots q_n \goesto[a] q_1'\dots q_n'$ in $B_{\pi\gamma}$.

Let $q_1\dots q_n \goesto[a] q_1'\dots q_n'$ in $B_{\pi\gamma}$. This
means that both $\gamma_1$ and $\gamma_2$ contain the interaction $a$.
Therefore, by the construction of $\gamma_2$, there is at least one rule 
\begin{equation}
  \label{eq:rule2-1}
  \derrule[3]{
    \Setdef{q_i \goesto[a \cap P_i] q_i'}{i \in I}
    &
    \Setdef{q_i = q_i'}{i \not\in I}
    &
    \Setdef{q_j \ngoesto[b_{j,k}]}{j \in J, k \in K_j}
  }{
    q_1\dots q_n \goesto[a] q_1'\dots q_n'
  }
  \,,
\end{equation}
among the rules defining $o$.  Furthermore, the priority model
$\pi_{d(a)-1}$ contains priorities
of the form $a \prec b$, with $b = \bigcup_{s \in R_{a}}b_{j(s),k(s)}^s$, for all $j: R_a \leadsto J$ and $k(s) \in K_{j(s)}^s$.  Notice that a priority rule $b \prec c$ such that $a \prec b$ cannot appear in priorities $\pi_1, \dots, \pi_{d(a)-1}$ since $d(b) \geq d(a) + 1$.  Assume that none of rules defining $o$, with the conclusion labelled by $a$, applies in $q_1\dots q_n \goesto[a] q_1'\dots q_n'$.  This necessarily means that each of these rules has a negative premise that is not satisfied.  Let $b = \bigcup_{s \in R_{a}}b_{j(s),k(s)}^s$ with $b_{j(s),k(s)}^s$, for all $s \in R_a$, being the labels of dissatisfied premises.  Then $b$ is an enabled interaction in $\gamma_1(B_1,\dots, B_n)$ such that $a \less b$ and $b$ cannot be blocked by priorities $\pi_1, \dots, \pi_{d(a)-1}$. Consequently, $b$ is enabled in $\pi_{d(a)-1}\gamma_1\bigl(\dots \pi_1\gamma_1(B_1,\dots, B_n)\dots\bigr)$ and blocks $a$, which contradicts the assumption $q_1\dots q_n \goesto[a] q_1'\dots q_n'$ in $B_{\pi\gamma}$.  Hence, there is at least one rule of the form \eq{rule2-1} in the definition of $o$ with all premises satisfied in $q_1\dots q_n$ and, therefore, $q_1\dots q_n \goesto[a] q_1'\dots q_n'$ in $B_o$.
\end{proof}

Thus, we conclude that BIP has weak full expressiveness \wrt
the class of BIP-like SOS operators with acyclic inhibiting
relations.

\paragraph{What can be done in the general case?}
In \cite{BarBliu15-offer}, we have proposed the following
notion of relaxed priority model.

\begin{definition}
  \label{defn:relaxed}
  Let $P$ be a set of ports.  A {\em relaxed priority model}
  on $P$ is a relation $\pi \subseteq 2^P \times (2^P
  \setminus \{\emptyset\})$.  A {\em relaxed BIP operator}
  is a triple $\bigl((P_i)_{i=1}^n, \gamma, \pi)$, with $P =
  \bigcup_{i=1}^n P_i$, such that $\gamma \subseteq 2^P \setminus \{\emptyset\}$ is an
  interaction model and $\pi \subseteq \gamma \times \gamma$
  is a relaxed priority model.
\end{definition}

The semantics of relaxed priority models is defined exactly
as that of classical priority models, by \eq{prisem}.
Notice that we do not require the relation $\pi$ to be
acyclic.  If all interactions involved in a cyclic
dependency in $\pi$ are enabled simultaneously, they block
each other, potentially introducing a deadlock.

Given a BIP-like SOS operator $o$, we consider its inhibiting relation $\pi$ (see \eq{inhibit}) and the
interaction models $\gamma_1, \gamma_2$ as in the proof of
\prop{acyclic}. 
Since $\pi \subseteq \gamma_1 \times \gamma_1$, the operator
$(\gamma_1, \pi)$ is a relaxed BIP operator.
The operator $o$ is then equivalent to the
composition $(\gamma_2, \emptyset) \circ (\gamma_1, \pi)$,
where $\pi$ is considered as a relaxed priority model.

\begin{proposition}
  \label{prop:express:weak}
For any set of behaviours $B_i = (Q_i, P_i, \goesto{})$, with $i\in [1,n]$, holds 
\[
\sigma\bigl(\gamma_2\bigl(\pi\gamma_1(B_1,\dots, B_n)\bigr)\bigr) = o(B_1,\dots, B_n)\,.
\]
\end{proposition}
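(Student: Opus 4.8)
The plan is to follow the two-inclusion scheme of the proof of \prop{acyclic}, but to exploit the fact that a \emph{relaxed} priority model may be applied in a single step, so that none of the layer-by-layer bookkeeping (the decomposition $\pi = \pi_1,\dots,\pi_d$ and the depth argument) is needed. As in \prop{acyclic}, the behaviours $B_o = o(B_1,\dots,B_n)$ and $B_{\pi\gamma} = \sigma\bigl(\gamma_2(\pi\gamma_1(B_1,\dots,B_n))\bigr)$ share the same set of states $\prod_{i=1}^n Q_i$ and the same port set $P$, so it suffices to prove that their transition relations coincide. Throughout, I would use that, by construction, $\gamma_2 = \setdef{a^l}{l \in [1,m]} \subseteq \gamma_1$ and that every union $b = \bigcup_{s \in R_a} b_{j(s),k(s)}^s$ arising from a choice mapping belongs to $\gamma_1$, so that $(\gamma_1,\pi)$ is a relaxed BIP operator (\defn{relaxed}) and the outer $\gamma_2$ acts as a filter keeping exactly the transitions whose label lies in $\gamma_2$.

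For the inclusion showing every transition of $B_o$ is a transition of $B_{\pi\gamma}$, I would start from $q_1\dots q_n \goesto[a] q_1'\dots q_n'$ in $B_o$, witnessed by some rule $l \in R_a$ of \eq{gsos} all of whose premises hold at $q_1\dots q_n$. Since $a = a^l \in \gamma_2 \subseteq \gamma_1$, the interaction rule \eq{transsem} yields the same transition in $\gamma_1(B_1,\dots,B_n)$. It then remains to check that $a$ survives the single application of $\pi$, i.e.\ that no $b$ with $a \less b$ is enabled in $\gamma_1(B_1,\dots,B_n)$: by \eq{inhibit} every such $b$ has the shape $\bigcup_{s\in R_a} b_{j(s),k(s)}^s$ and in particular contains the sub-label $b_{j(l),k(l)}^l$ contributed by rule $l$, which is disabled because the negative premises of $l$ hold; hence $b$ is disabled and, by \eq{prisem}, $a$ is preserved. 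As $a \in \gamma_2$, the $\gamma_2$-filter keeps it, so the transition lies in $B_{\pi\gamma}$.

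For the converse inclusion, I would take a transition $q_1\dots q_n \goesto[a] q_1'\dots q_n'$ of $B_{\pi\gamma}$; then $a \in \gamma_2$ (so $R_a \neq \emptyset$) and, by \eq{prisem}, $a$ is enabled in $\gamma_1(B_1,\dots,B_n)$ while no $b$ with $a \less b$ is. The enabledness of $a$ in $\gamma_1$ means that the positive part $\setdef{q_i \goesto[a\cap P_i] q_i'}{i \in I}$, shared by all rules of $R_a$, is satisfied. Arguing by contradiction, suppose no rule of $R_a$ fires at $q_1\dots q_n$; since its positive premises hold, each such rule $s$ must then have a dissatisfied negative premise, i.e.\ some $(j(s),k(s))$ with $q_{j(s)} \goesto[b_{j(s),k(s)}^s]$. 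These choices define a choice mapping $j : R_a \leadsto J$, and I would set $b = \bigcup_{s \in R_a} b_{j(s),k(s)}^s$, so that $a \less b$ by \eq{inhibit} and $b$ is enabled in $\gamma_1(B_1,\dots,B_n)$. This contradicts the preservation of $a$ under $\pi$; hence some rule of $R_a$ fires and the transition belongs to $B_o$.

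The step I expect to be most delicate is the enabledness of the blocking interaction $b$ in the converse direction: one must argue that the union of the individually enabled sub-labels $b_{j(s),k(s)}^s$ is itself an enabled interaction of $\gamma_1(B_1,\dots,B_n)$. Crucially, though, because $\pi$ is now applied as a single relaxed priority rather than as the stack of strict partial orders used in \prop{acyclic}, the premises of \eq{prisem} refer directly to $\gamma_1(B_1,\dots,B_n)$; hence I only need $b$ to be enabled in $\gamma_1$ and never have to track whether $b$ itself survives any intermediate priority layer. This is exactly the simplification that dropping acyclicity (\defn{relaxed}) buys, and it is why a cyclic inhibiting relation causes no problem here, in contrast to \prop{notbipcycle}.
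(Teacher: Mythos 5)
Your proof follows the paper's argument essentially verbatim: the same two-inclusion scheme, the same reduction to comparing transition relations, the same use of a choice mapping over the dissatisfied negative premises to build the blocking interaction $b$ in the converse direction, and the same observation that a single relaxed priority model makes the layered decomposition of \prop{acyclic} unnecessary. The one step you flag as delicate\mdash that the union $b = \bigcup_{s \in R_a} b_{j(s),k(s)}^s$ of individually enabled sub-labels is itself enabled in $\gamma_1(B_1,\dots,B_n)$\mdash is asserted without further justification in the paper's proof as well, so your treatment matches it.
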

\begin{proof} 
For a set of behaviours $B_i = (Q_i, P_i, \goesto{})$, with $i\in
[1,n]$, denote
\begin{align*}
  B_o &= o(B_1,\dots,B_n)\,, 
  &B_{\pi\gamma} &= \sigma\bigl(\gamma_2\bigl(\pi\gamma_1(B_1,\dots, B_n)\bigr)\bigr)
  \,.
\end{align*}
The sets of states and ports of these behaviours are the same, thus we only need to check that their transitions coincide.

Let $q_1\dots q_n \goesto[a] q_1'\dots q_n'$ in $B_o$.  This means
that, among the rules defining $o$, \ie for some $l \in [1,m]$, there is a rule
\begin{equation}
  \label{eq:rule1}
  \derrule[3]{
    \Setdef{q_i \goesto[a \cap P_i] q_i'}{i \in I^l}
    &
    \Setdef{q_i = q_i'}{i \not\in I^l}
    &
    \Setdef{q_j \ngoesto[b_{j,k}^l]}{j \in J^l, k \in K_j^l}
  }{
    q_1\dots q_n \goesto[a] q_1'\dots q_n'
  }
  \,,
\end{equation}
such that $q_i \goesto[a \cap P_i]$, for all $i \in I$, and $q_j \ngoesto[b_{j,k}^l]$ for all $j \in J^l, k \in K_j^l$.
By construction both $\gamma_1$ and $\gamma_2$ contain $a$. Hence, $a$ is enabled in the state $q_1\dots q_n$ of $\gamma_1(B_1,\dots,B_n)$ and in the same state of $\gamma_2\bigl(\pi\gamma_1(B_1,\dots, B_n)\bigr)$, provided that it is not disabled by the priority $\pi$.  Thus, we have to show that no interaction available from this state has higher priority.  Priority rules in $\pi$ that contain $a$ are of the form $a \prec b$, with $b = \bigcup_{s \in R_{a}}b_{j(s),k(s)}^s$, for some $j: R_a \leadsto J$ and $k(s) \in K_{j(s)}^s$.  Since all the premises of \eq{rule1} are satisfied in $q_1\dots q_n$, interaction $b_{j(l),k(l)}^l$ is disabled.  Hence, $b$ is also disabled.  Thus $q_1\dots q_n \goesto[a] q_1'\dots q_n'$ in $B_{\pi\gamma}$.

Let $q_1\dots q_n \goesto[a] q_1'\dots q_n'$ in $B_{\pi\gamma}$. This
means that bot $\gamma_1$ and $\gamma_2$ contain the interaction $a$.
Therefore, by the construction of $\gamma_2$, there is at least one rule 
\begin{equation}
  \label{eq:rule2}
  \derrule[3]{
    \Setdef{q_i \goesto[a \cap P_i] q_i'}{i \in I}
    &
    \Setdef{q_i = q_i'}{i \not\in I}
    &
    \Setdef{q_j \ngoesto[b_{j,k}]}{j \in J, k \in K_j}
  }{
    q_1\dots q_n \goesto[a] q_1'\dots q_n'
  }
  \,,
\end{equation}
among the rules defining $o$.  Furthermore, the priority model
$\pi$ has to contain priorities
of the form $a \prec b$, with $b = \bigcup_{s \in R_{a}}b_{j(s),k(s)}^s$, for all $j: R_a \leadsto J$ and $k(s) \in K_{j(s)}^s$.  Assuming now that none of rules defining $o$, with the conclusion labelled by $a$, applies in $q_1\dots q_n \goesto[a] q_1'\dots q_n'$.  Since $q_1\dots q_n \goesto[a] q_1'\dots q_n'$ in $B_{\pi\gamma}$, this necessarily means that each of these rules has a negative premise that is not satisfied.  Let $b = \bigcup_{s \in R_{a}}b_{j(s),k(s)}^s$ with $b_{j(s),k(s)}^s$, for all $s \in R_a$, being the labels of dissatisfied premises.  Then $b$ is an enabled interaction such that $a \less b$, which contradicts the assumption $q_1\dots q_n \goesto[a] q_1'\dots q_n'$ in $B_{\pi\gamma}$.  Hence, there is at least one rule of the form \eq{rule2} in the definition of $o$ with all premises satisfied in $q_1\dots q_n$ and, therefore, $q_1\dots q_n \goesto[a] q_1'\dots q_n'$ in $B_o$.
\end{proof}

Thus, we conclude that BIP with relaxed priority models has
weak full expressiveness \wrt the set of all BIP-like SOS
operators.

Notice that the relaxed priority model does not allow
recovering strong full expressiveness.  For instance,
consider the operator defined by the single rule
\begin{equation}
\label{eq:notstrong}
  \derrule[2]{
    q_1 \goesto[p] q_1'
    &
    q_1 \ngoesto[r]
  }{
    q_1 \goesto[p] q_1'
  }\,,
\end{equation} 
applied to the behaviour in \fig{notbip2}. The composed component has a single transition $1 \goesto[p] 3$. The interaction model of BIP cannot contain $r$, as it is not possible to exclude transition $2 \goesto[r] 3$ with a priority model. The transition $3 \goesto[p] 3$ has to be excluded by the priority model, however it cannot use $r$ in the priority relation.

Further relaxation of the definition of the BIP
operator by removing the restriction $\pi \subseteq \gamma
\times \gamma$ requires a slight modification of the semantics. Clearly, the behaviour $\gamma(B_1, \dots, B_n)$ does not have transitions that are not in $\gamma$ and priority rules that can be applied to this behaviour are in $\gamma \times \gamma$. Thus, we need to apply interaction and priority models simultaneously. The semantics of the simultaneous application of an
interaction model $\gamma$ and a priority model $\pi$ is defined by putting
$\sigma(\pi\gamma(B_1,\dots, B_n)) \bydef{=} (Q, P,
\sgoesto[\pi\gamma]{})$, with $Q = \prod_{i=1}^n Q_i$ and the minimal
transition relation $\sgoesto[\pi\gamma]{}$ inductively defined by the set of rules
\begin{equation}
  \label{eq:transsem2}
  \left\{\left.
  \derrule[3]{
    \Setdef{q_i \goesto[a \cap P_i] q'_i}{i \in I}
    &
    \Setdef{q_i = q'_i}{i \not\in I}
    &
    \Setdef{q_j \ngoesto[b \cap P_j]}{b \in K_a}
  }{
    q_1\dots q_n \sgoesto[\pi\gamma]{a} q'_1\dots q'_n
  }
  \ \right|\
    a \in \gamma, j : K_a \leadsto [1,n]
  \right\}
  \,,
\end{equation}
where $I = \setdef{i\in [1,n]}{a \cap P_i \neq \emptyset}$, $K_a = \{b | a \prec b\}$ and $j : K_a \leadsto [1,n]$ is a choice mapping $j : K_a \rightarrow [1,n]$, such that, for all $b \in K_a$, holds $b \cap P_{j(b)} \neq \emptyset$.

With this relaxation we obtain strong full expressiveness, since
the operator $o$ is then clearly equivalent to $(\gamma_2, \pi)$.

\paragraph{What cannot be achieved?}
Consider another relaxation of the definition of BIP glue
operators, by considering operators $\bigl((P_i)_{i=1}^n,
\gamma, \pi\bigr)$, with $P = \bigcup_{i=1}^n P_i$, such
that the priority model $\pi \subseteq 2^P \times (2^P
\setminus \{\emptyset\})$ is a strict partial order, without
requiring that it refer only to interactions (\ie we do not
impose $\pi \subseteq \gamma \times \gamma$).
%
This relaxation does not recover even weak full
expressiveness \wrt BIP-like SOS operators.  Indeed,
\ex{notbip2} is still no expressible.

\section{Related Work}
\label{secn:related}

The results in this paper build mainly on our previous work.  However,
the following related work should also be mentioned.

Usually, comparison between formalisms and models is by flattening
structure and reduction to a behaviorally equivalent model, \eg
automata and Turing machine.  In this manner, all finite state
formalisms turn out to be expressively equivalent independently of the
features used for the composition of behaviors.  Many models and languages
are Turing-expressive, while their coordination capabilities are
tremendously different. \cite{BliSif08-express-concur}

A first framework formally capturing meanings of expressiveness for
sequential programming languages and taking into account not only the
semantics but also the primitives of languages was provided in
\cite{fell90-exp-power-pl}.  It allows formal reasoning about and
distinguishing between {\em core elements} of a language and {\em
  syntactic sugar}.  Although a number of studies have taken a similar
approach in the context of concurrency, we will only point to
\cite{GORLA20101031} and the references therein.  The key difference
of our approach lies in the strong separation between the computation
and coordination aspects of the behaviour of concurrent systems.
Indeed, we consider that all sequential computation resides within the
components of the system that are not subject to any kind of
modification.  Thus, we focus on the following question: {\em what
  system behaviour can be obtained by coordination of a given set of
  concurrent components?}  In particular, this precludes the
expression of parallel composition by choice operators, as in the
expansion law \cite{milner89}.

An extensive overview of SOS formats is provided in \cite{sos20ya},
including some results comparing their expressiveness.  More results
comparing different formats of SOS can be found in \cite{ordsos}.  The
expressiveness property is closely related to the translation between
languages.  One of the definitions of encoding compared with other
approaches can be found in \cite{van2012musings}.  It should be noted,
however, that the above mentioned separation of concerns principle
also leads to a very simple rule format.  Indeed, the format that we
consider is a small subset of GSOS.  Our focus in this paper, is more
on the expressiveness of coordination mechanism provided by BIP than
on that of the various SOS rule features.

There exist several works comparing BIP with various connector frameworks. A comparative study of three
connector frameworks\mdash tile model \cite{montanari06}, wire
calculus \cite{sobocinski09-wire} and BIP \cite{bip06}\mdash was presented in~\cite{bruni12-tiles-wire-BIP}. Recently an attempt to relate BIP and Reo has been done \cite{dokter2015relating}. 
From the operational semantics perspective, these comparisons only take
in account operators with positive premises.  In particular, priority
in BIP is not considered.

Finally, in our formalisation of component-based frameworks, we rely
on the notion of ``behaviour type''.  This can cover a very large
spectrum, ranging from programs and labelled transition systems,
through OSGi bundles and browser plug-ins, to systems of differential
equations \etc[] Behaviour types can be organised in type systems and
studied separately, as, for example, in the co-algebra theory
\cite{Rutten00}.  However, this notion should be distinguished, for
instance, from classes in object-oriented programming or session
\cite{Dezani10-session-types,honda98-session-types} and behavioural
\cite{Huttel16-foundations-types} types for communication protocols.
For instance, the notion of a class could be compared to that of a
behaviour type in our sense as follows: a program would typically comprise a
multitude of classes, whereas a component framework has only one
underlying behaviour type.  Although, in principle, component-based
frameworks can be heterogeneous, \eg Ptolemy~II \cite{Ptolemy}, that
is rely on several distinct behaviour types for the design process,
those aimed at the design of executable systems must have an
underlying unifying behaviour type allowing the study and manipulation of
a system as a whole.


\section{Conclusion}
\label{secn:conclusion}

Our previous investigations \cite{BarBliu15-offer} of several
properties that we consider fundamental for component-based design
frameworks have revealed that the often encountered informal
statement: ``BIP possesses the expressiveness of the universal glue''
(or its equivalent in slightly different formulations) is based on an
erroneous proposition in previous work \cite[Proposition
  4]{BliSif08-express-concur}.  We have, therefore, undertook an
additional study of BIP expressiveness, whereof the results have been
presented in this paper.

To achieve this goal, we rely on the algebraic formalisation of the
notion of component-based design framework introduced in
\cite{BarBliu15-offer}.  We have defined two new properties,
\emph{weak} and \emph{strong full expressiveness} \wrt a given set of
composition operators, which characterise whether these can be
expressed by using the composition operators of the component-based
framework under consideration.  These two properties are very close to
the {\em weakly more expressive} and {\em strongly more expressive}
preorders introduced in \cite{BliSif08-express-concur}.  In
particular, for a component-based framework $(\cA, \sigma, \simeq)$,
with the underlying structure $\cA$ being generated by a set of glue
operators $\glue$, the strong full expressiveness property \wrt a set
of operators $\cO$ coincides with the statement that $\glue$ is strongly
more expressive than $\cO$ in terms of \cite{BliSif08-express-concur}.
However, the formal definition that we have provided in \secn{theory}
is novel and has the advantage of fitting elegantly with that of the
component-based design frameworks in \cite{BarBliu15-offer}.
Furthermore, the notion of weak full expressiveness is different from
the {\em weakly more expressive} preorder: the former relaxes the
strong form of the property by allowing hierarchical composition of
glue operators, whereas the latter considers only flat operators, but
allows a limited use of additional coordinating behaviour.  Studying
the combination of the two relaxations could be an interesting
direction for future work.

We have studied the weak and full expressiveness of BIP \wrt operators
defined by SOS rules in a particular format, which we call {\em
  BIP-like SOS}.  The set of all the operators that can be defined in
this format is the ``universal glue'', \wrt which full expressiveness
has been erroneously claimed in \cite{BliSif08-express-concur}.

We observe that there are two obstacles to achieving strong full
expressiveness: 1)~a priority model is required to be a {\em strict
  partial order} on interactions and 2)~by the definition of the BIP
operational semantics, priorities can only be applied to interactions
that appear in the interaction model.  The combination of these two requirements ensures that priorities cannot introduce new deadlocks.  However, negative premises in BIP-like SOS rules\mdash which correspond to priorities in BIP glue operators\mdash can introduce deadlocks.  To characterise this situation, we consider, for a set of BIP-like SOS rules, a corresponding {\em inhibiting relation}.  In order to introduce deadlocks, this relation must have cycles.  We show that BIP glue operators have weak full expressiveness \wrt BIP-like SOS operators that have acyclic inhibiting relations, with at most $d+1$ layers of glue necessary to encode a BIP-like SOS operator, whereof the depth of the inhibiting relation is $d$.

A relaxation of both of the above
requirements together recovers {\em strong} full expressiveness \wrt all
BIP-like SOS.  However, it calls for a definition of the operational
semantics of BIP glue operators, which combines the interaction and
the priority models, as opposed to the classical definition, where the
interaction model is applied first, then the priority model is applied
to the resulting component.  

A relaxation of only the first requirement, which does not require any
other modifications of the BIP semantics, leads to {\em weak} full
expressiveness \wrt the set of all BIP-like SOS operators.  Moreover,
we have shown that at most two layers of glue are necessary to encode
any operator.


As mentioned above, studying the combination of the two weak forms of
full expressiveness\mdash allowing both hierarchical glue and limited
use of additional coordinating behaviour\mdash could be an interesting
direction for future work.  Another direction for future work would
consist in exploring the expressiveness of the full BIP framework,
including the data manipulation and transfer, which has been recently
formalised in \cite{BBJS14-internalisation}.  Finally, a third
extension could consist in studying larger SOS formats, including, for
instance, {\em witness premises}, \ie positive premises that allow
testing the possibility of an action that does not, however,
contribute to the conclusion of the rule.


\bibliographystyle{eptcs}
\bibliography{bip,connectors,express,glue}

\end{document}